\def\a{{\alpha}}
\def\g{{\gamma}}
\def\q{{\theta}}
\def\f{{\phi}}
\newtheorem{thm}{Theorem}
\newtheorem{lm}{Lemma}
\newcommand{\squeezelist}{\setlength{\itemsep}{0pt}}
\newenvironment{customthm}[1]
  {\innercustomthm}
  {\endinnercustomthm}
\newcommand{\figlab}[1]{\label{fig:#1}}
\newcommand{\seclab}[1]{\label{sec:#1}}
\newcommand{\lemlab}[1]{\label{lem:#1}}
\newcommand{\thmlab}[1]{\label{thm:#1}}
\newcommand{\figref}[1]{\ref{fig:#1}}
\newcommand{\secref}[1]{\ref{sec:#1}}
\newcommand{\lemref}[1]{\ref{lem:#1}}
\newcommand{\thmref}[1]{\ref{thm:#1}}
\begin{document}

\title{\textbf{Cut Locus Realizations\\
on Convex Polyhedra}}

\author{Joseph O'Rourke and Costin V\^\i lcu}

\date{\today}

\maketitle

\begin{abstract}
We prove that every positively-weighted tree $T$ can be realized
as the cut locus $C(x)$ of a point $x$ on a convex polyhedron $P$,
with $T$ weights matching $C(x)$ lengths.
If $T$ has $n$ leaves, $P$ has (in general) $n+1$ vertices.
We show there are in fact a continuum of polyhedra $P$ each realizing
$T$ for some $x \in P$.
Three main tools in the proof are properties of the star unfolding of $P$, 
Alexandrov's gluing theorem, and a cut-locus partition lemma.
The construction of $P$ from $T$ is surprisingly simple.
\end{abstract}


\section{Introduction}
\seclab{Introduction}

There is a long tradition of reversing, in some sense, the construction of
a graph $G$ from a geometric set.
The geometric set may be a point set, a polygon, or a polyhedron,
and the graph $G$ could be the Voronoi Diagram, the straight skeleton,
or the cut locus, respectively.
Reversing would start with, say,  the straight skeleton,
and reconstruct a polygon with that skeleton.
Here we start with the cut locus and construct polyhedra $P$
on which the cut locus is realized for a point $x \in P$.
(The cut locus is defined in Section~\secref{CL} below.)

The literature has primarily examined three models for the
graph $G$, often specialized (as here) to trees $T$:
\begin{enumerate}[label={(\arabic*)}]
\squeezelist
\item \emph{Unweighted tree}: The combinatorial structure of $T$, without further information. 
\item \emph{Length tree}: $T$ with positive edge weights representing Euclidean
lengths, and with given circular order of the edges incident to each node of $T$.
Called ``ribbon trees" in~\cite{cdlr-scrt-14}, and ``ordered trees" 
in~\cite{biedl2016realizing}.
\item \emph{Geometric tree}: Given by a drawing, i.e., coordinates of nodes,
determining lengths and angles.
\end{enumerate}

\noindent
Our main result is this:
\begin{thm}
\thmlab{main}
Given a length tree $T$ of $n$ leaves, we can construct a continuum of
star-unfoldings of convex polyhedra $P$ of $n+1$ vertices, each of which,
when folded, realizes $T$ as the cut locus $C(x)$ for a point $x \in P$.
Each star-unfolding can be constructed in $O(n)$ time.
\end{thm}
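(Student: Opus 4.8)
The plan is to avoid constructing $P$ directly and instead to build its \emph{star unfolding} as a planar polygon $Q$, from which $P$ is recovered by Alexandrov's gluing theorem. First I would pin down the mechanism behind the vertex count. For any point $x$ on a convex polyhedron, $C(x)$ is a tree whose leaves are exactly the vertices of $P$ joined to $x$ by a \emph{unique} shortest path, while $x\notin C(x)$. I would therefore realize $x$ itself as the $(n{+}1)$-st vertex of $P$: a cone point of angle $\theta_x<2\pi$ from which $n$ shortest paths run, one to each intended leaf $v_1,\dots,v_n$ of $T$. Cutting $P$ along these $n$ paths gives the star unfolding $Q$, a $2n$-gon whose boundary alternates between $n$ images $x^{(1)},\dots,x^{(n)}$ of $x$ and the $n$ vertices $v_i$. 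The tool I would lean on is the known correspondence for star unfoldings: inside $Q$, the cut locus $C(x)$ is carried by the portion of the planar Voronoi diagram of the source images $x^{(1)},\dots,x^{(n)}$ lying in $Q$. Thus realizing $T$ as $C(x)$ becomes the planar problem of placing $n$ sites whose Voronoi skeleton, clipped to $Q$, is a tree isometric to $T$.

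The construction I would use reverses this correspondence directly. I would draw $T$ in the plane as a straight-edged tree with the prescribed edge lengths and the prescribed ribbon (cyclic) order at each node; its edges are to become Voronoi bisectors. Because adjacent Voronoi cells have sites that are mirror images across their shared bisector, and because $T$ has no cycles, I can generate all $n$ sites consistently: fix one site $x^{(1)}$ and reflect it across the edges of $T$ as I pass from one boundary sector to the next, obtaining $x^{(2)},\dots,x^{(n)}$ with no monodromy. The polygon $Q$ is then the $2n$-gon through $x^{(1)},v_1,x^{(2)},v_2,\dots$, where each leaf $v_i$ is placed on the bisector of its flanking sites $x^{(i)},x^{(i{+}1)}$ at the point where the corresponding edge of $T$ exits; this automatically enforces $|x^{(i)}v_i|=|x^{(i{+}1)}v_i|$, the equal-length condition needed to zip the two banks of each cut back together. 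Every step is a constant-time local computation carried out in one traversal of $T$, giving the $O(n)$ bound.

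To fold $Q$ I would check Alexandrov's hypotheses: the zipped complex is a topological sphere, and every point has total angle at most $2\pi$ --- at each $v_i$ the wedge angle of $Q$ is preserved and is $<2\pi$, seam points recombine to $2\pi$, and the reassembled apex has angle $\theta_x$ equal to the sum of the interior angles of $Q$ at the images $x^{(i)}$, which I must keep strictly below $2\pi$ so that $x$ is a genuine $(n{+}1)$-st vertex and $P$ is nondegenerate. Alexandrov's theorem then yields a unique convex $P$. It remains to certify that, after folding, the intrinsic cut locus of $x$ is exactly $T$, not some larger or differently-metrized tree; here I would invoke the cut-locus partition lemma to show that the engineered bisector segments are precisely the points carrying two equal shortest paths, with no spurious branches from shortcut geodesics, and that the clipped Voronoi lengths transport to the cut-locus weights of $T$. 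Finally, flexing the interior angles of the planar drawing of $T$ (and, when $T$ is a path, sliding the initial site) leaves every open condition intact while changing $P$, producing the asserted continuum of realizing polyhedra.

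The hard part is not the combinatorics but the simultaneous satisfaction of three coupled requirements: that $Q$ be a simple, non-overlapping polygon; that the glued cone angles, above all $\theta_x$, stay below $2\pi$ so that Alexandrov applies and $P$ is convex; and that the folded cut locus be exactly $T$ with matching lengths. Writing down sites whose Voronoi diagram merely looks combinatorially like $T$ is easy, but guaranteeing that the surface metric obtained after zipping admits no unexpected shortest paths --- which would split, shorten, or reroute cut-locus edges --- is the crux, and is precisely what the cut-locus partition lemma is designed to control.
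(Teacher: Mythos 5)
Your overall strategy---build the star unfolding as a planar polygon, fold it with Alexandrov's theorem, and identify the cut locus via the Voronoi correspondence and the partition lemma---is the same as the paper's. But your construction runs in the opposite direction from the paper's, and that reversal introduces a genuine gap. You draw $T$ first (with angles you choose freely) and then generate the source images by reflecting $x^{(1)}$ across the edges of $T$, claiming there is ``no monodromy'' because $T$ is a tree. The relevant monodromy, however, does not live on cycles of $T$; it lives on cycles of the cell-adjacency graph, which has one independent cycle for every \emph{internal} node of $T$. For each edge $e$ of $C(x)$, the two flanking images of $x$ must be mirror images across the line of $e$ (this is what ``congruent by pairs'' in Lemma~\lemref{FundamentalTriangles} amounts to, and it is forced because both images are equidistant from every point of $e$). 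Around an internal node $w$ of degree $k$, the composition of the $k$ reflections across the incident edge-lines is a nontrivial isometry fixing $w$ (a rotation for $k$ even, a reflection for $k$ odd), and requiring it to return $x^{(1)}$'s orbit to itself imposes codimension-one conditions on the angles of your drawing. Generic angle choices violate them, so your site set is not well defined and your tree edges are not genuine bisectors of the full site set. The paper sidesteps this entirely by placing the sites first---all at distance $\lambda$ from the root, and recursively at the cup apices---and \emph{defining} each tree edge as the bisector of its two flanking sites, so consistency holds by construction.

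Two further steps are asserted rather than proved. First, you say $\theta_x$ ``must be kept'' below $2\pi$ but supply no mechanism; this is exactly the role of the free parameter $\lambda$ and of Lemma~\lemref{xFormula}, which gives the explicit sufficient bound $\lambda \ge L\left[1+\cot(\pi/m)\right]$. Second, you defer ``no spurious shortest paths'' to the partition lemma, but Lemma~\lemref{Partition} only splices together cut loci that are already known; it certifies $C(x)=T$ only when used inductively, peeling $T$ into subtrees down to a base case (a single edge, folding to a doubly covered triangle) where the cut locus is computed directly, as in Lemmas~\lemref{Star-T} and~\lemref{Any-T}. Without that induction, the crux you correctly identify remains open. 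Your source of the continuum (flexing the drawing's angles, moving $x^{(1)}$) is morally the paper's (free choices of $\alpha$, of the angular distribution of the $x_i$, and of $\lambda$), but it too is only available once the consistency conditions above are under control.
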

\noindent
Thus, every length tree is isometric to a cut locus on a convex polyhedron.

\subsection{Related Results}
The computer science literature is extensive, and we cite just
a few results:
\begin{itemize}
\squeezelist
\item Every unweighted tree can be realized as the Voronoi diagram of
a set of points in convex position~\cite{liotta2003voronoi}.
\item Every length tree can be realized as the furthest-point Voronoi
diagram of a set of points~\cite{biedl2016realizing}.
\item Every unweighted tree can be realized as the 
straight skeleton of a convex polygon,
and conditions for length-tree realization are 
known \cite{cdlr-scrt-14} \cite{aichholzer2015representing} \cite{biedl2016realizing}.
\end{itemize}
In all cases, the reconstruction algorithms are efficient:
either $O(n)$ or $O(n \log n)$ for trees of $n$ nodes.
Although all these results can be viewed as variations on realizing Voronoi diagrams,
and a cut locus is a subgraph of a Voronoi diagram,
it appears that prior work does not imply our results. 

Our inspiration derives from two results in the convexity literature:
\begin{itemize}
\squeezelist
\item Every length graph can be realized as a cut locus on a Riemannian 
surface~\cite{itoh2015every}. The result is non-constructive. 
\item Every unweighted tree can be realized as 
a cut locus on a doubly covered convex polygon,
and length trees can be realized on such polygons when several conditions are 
satisfied~\cite{itoh2004farthest}.
\end{itemize}


\section{Background}

In this section we describe the tools needed to prove our main theorem,
drawing heavily on our~\protect\cite{ov-teb-2020}.

\subsection{Cut Locus}
\seclab{CL}
The \emph{cut locus} $C(x)$ of a point $x$ on (the surface of) a convex polyhedron $P$ is the closure of the set of points to which there are more than one shortest path from $x$.
This concept goes back to Poincar\'e~\cite{p-lgsc-1905}, and has been studied algorithmically 
since~\cite{sharir1986shortest}
(under the name ``ridge tree").
Some basic properties and terminology:
\begin{itemize}
\squeezelist
\item $C(x)$ is a tree whose endpoints are vertices of $P$, and all vertices of $P$ are in $C(x)$.
\item Points interior to $C(x)$ of tree-degree $3$ or more 
we will call \emph{ramification points}.
\item The edges of $C(x)$ are \emph{geodesic segments} on $P$,
geodesic shortest paths between their endpoints ~\cite{aaos-supa-97}. 
\end{itemize}

\subsection{Star Unfolding}
The \emph{star unfolding} $S_P(x)$ of $P$ with respect to $x$ is formed by cutting
a shortest path from $x$ to every vertex of $P$
\cite{ao-nsu-92}~\cite{aaos-supa-97}.
This unfolds to a simple non-overlapping planar polygon 
$S=S_P(x)$ of $2n$ vertices: $n$ images $x_i$ of $x$, and $n$ images of the vertices $v_i$ of $P$.
The connection between the cut locus and the star unfolding is that
the image of $C(x)$ in $S$ is the restriction to $S$ of the Voronoi diagram of the images of $x$  \cite{ao-nsu-92}.
See Fig.~\figref{StarUnfCube}.
\begin{figure}
\centering
 \includegraphics[width=1.0\textwidth]{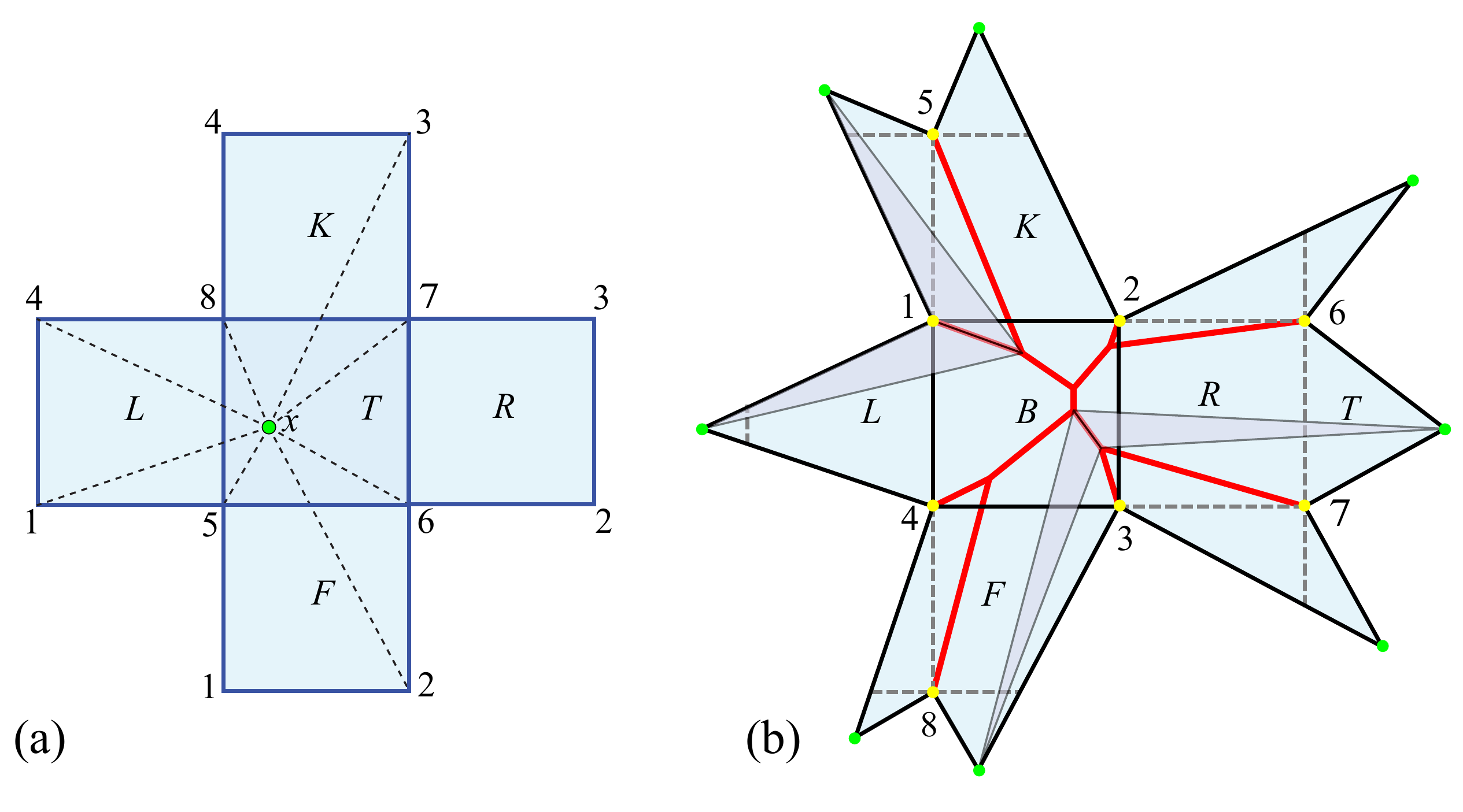}
\caption{(a)~Cut segments to the $8$ vertices of a cube from a point $x$
on the top face. T, F, R, K, L, B $=$ Top, Front, Right, Back, Left, Bottom.
(b)~The star-unfolding from $x$. The cut locus $C(x)$ (red) is the Voronoi
diagram of the $8$ images of $x$ (green).
Two pairs of fundamental triangles are shaded.}
\figlab{StarUnfCube}/
\end{figure}

\subsection{Alexandrov's Gluing Theorem}
We rely on Alexandrov's celebrated ``Gluing" Theorem~\cite[p.100]{a-cp-05}.

\begin{customthm}{AGT}
\label{AGT} 
If the boundaries of planar polygons are glued together (by identifying portions of the same length) such that
\begin{enumerate}[label={(\arabic*)}]
\item The perimeters of all polygons are matched (no gaps, no overlaps).
\item The resulting surface is a topological sphere.
\item At most $2\pi$ surface angle is glued at any point.
\end{enumerate}
Then the result is isometric to a convex polyhedron $P$,
possibly degenerated to a doubly-covered convex polygon. 
Moreover, $P$ is unique up to rigid motion and reflection. 
\end{customthm}
\noindent
The proof of this theorem is nonconstructive, and there remains no effective
procedure for constructing the polyhedron guaranteed to exist by this theorem.

\subsection{Fundamental Triangles}
The following lemma is one key to our proof.
See Fig.~\figref{StarUnfCube}(b).
\begin{lm}[Fundamental Triangles~\cite{inv-cfcp-12}]
\lemlab{FundamentalTriangles} 
For any point $x \in P$,
$P$ can be partitioned into flat triangles whose bases are edges of $C(x)$,
and whose lateral edges are geodesic segments from $x$ to 
the ramification points 
or leaves of $C(x)$. Moreover, those triangles are isometric to
plane triangles, congruent by pairs. 
\end{lm}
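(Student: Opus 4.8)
The plan is to build the triangles by sweeping shortest paths from $x$ along each edge of $C(x)$ and then to read off the congruence pairing directly from side lengths. Throughout I use two facts already recorded above: every point of $P \setminus C(x)$ is joined to $x$ by a \emph{unique} shortest path, whereas a point in the relative interior of an edge of $C(x)$ is joined to $x$ by exactly two; and all vertices of $P$ are leaves of $C(x)$, so $P \setminus C(x)$ contains no vertex (cone point) of $P$. The second fact is what ultimately makes the regions flat.

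First I would fix an edge $e=ab$ of the tree $C(x)$, where $a,b$ are consecutive significant points (leaves or ramification points), and recall that $e$ is a geodesic segment. For each $p$ in the relative interior of $e$ there are exactly two shortest paths $\gamma_p^{1},\gamma_p^{2}$ from $x$ to $p$, one approaching $e$ from each side. As $p$ ranges over $e$, the two families sweep out two regions $T_e^{1},T_e^{2}$, one on each side of $e$. The first substantive step is to show that each $T_e^{i}$ is bounded exactly by $e$ together with the two geodesic segments $xa$ and $xb$ obtained as the limits $p\to a$ and $p\to b$ of $\gamma_p^{i}$; that is, the swept region closes up into a curvilinear triangle with apex $x$ and base $e$.

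Next I would argue that these regions tile $P$. The covering follows from the characterization of the cut locus as the set of endpoints of maximal shortest paths: given any $q \in P \setminus C(x)$, prolong the unique shortest path $xq$ beyond $q$ until it first meets $C(x)$, at a point $p(q)$ in the relative interior of a unique edge $e$; then $q$ lies in the region $T_e^{i}$ on the side from which $xq$ approaches $p(q)$. This assignment $q \mapsto (e,i)$ is single-valued, which yields simultaneously that the $T_e^{i}$ cover $P$ and that their interiors are pairwise disjoint. Since each open region lies in $P \setminus C(x)$, it contains no vertex of $P$, hence is intrinsically flat; a flat region bounded by three geodesic arcs develops isometrically onto a plane triangle whose vertices are the images of $x$, $a$, $b$, with side lengths $d(x,a)$, $d(x,b)$, and $\ell(e)$. (In the star unfolding $S_P(x)$ this picture is literally visible: each such triangle is a Voronoi-cell piece with apex an image $x_i$ and base a Voronoi edge, which one can use to corroborate the construction.)

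Finally the pairing is immediate: the two triangles $T_e^{1},T_e^{2}$ attached to the same edge $e$ develop to plane triangles with the \emph{identical} side lengths $d(x,a)$, $d(x,b)$, $\ell(e)$, so by SSS they are congruent. I expect the main obstacle to be the first step — verifying that each swept region genuinely closes into a single triangle with base the whole of $e$ — since this requires controlling the two shortest paths $\gamma_p^{i}$ as $p$ tends to the endpoints $a,b$, showing the limiting segments $xa,xb$ are well defined, and checking that the several triangles meeting at a ramification point $r$ (which $x$ reaches by as many shortest paths as the degree of $r$) fit together around $r$ with no gap or overlap. Continuity of the two shortest-path families and the uniqueness of shortest paths off $C(x)$ are the tools I would use to pin this down.
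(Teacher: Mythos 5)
The paper does not actually prove this lemma: it is imported verbatim from~\cite{inv-cfcp-12}, so there is no in-paper argument to compare against. Your proposal reconstructs what is essentially the standard proof of this decomposition (sweep each cut-locus edge by the two one-sided families of shortest paths from $x$, show the complement of $C(x)$ is tiled by the resulting regions via maximal extension of shortest paths, use the absence of cone points off $C(x)$ to develop each region onto a plane triangle, and conclude congruence of twins by SSS since both have side lengths $d(x,a)$, $d(x,b)$, $\ell(e)$), and the skeleton is sound. Two points deserve tightening. First, the step you yourself flag---that each swept region closes up into a single triangle with base all of $e$---is genuinely the crux, and the key auxiliary fact you need to make it work is that two shortest paths emanating from $x$ meet only at $x$ (and possibly at a common endpoint): otherwise a crossing point would be an interior cut point of a shortest path, which is impossible since a shortest path cannot remain minimizing past a point admitting a second shortest path from $x$ (the concatenation would be a shortest path with a corner). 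This disjointness, plus one-sided continuity of $p\mapsto\gamma_p^{i}$ and monotonicity of the initial direction at $x$, is what forces the family to sweep a disk bounded by $e$, $xa$, $xb$ with no gaps. Second, in your tiling argument the extension of $xq$ can first meet $C(x)$ at a ramification point or a leaf rather than in the relative interior of an edge; such $q$ lie exactly on the lateral edges shared by consecutive triangles, so the assignment $q\mapsto(e,i)$ is single-valued only off a one-dimensional set, which is all you need but should be said. A small slip: you assert that all vertices of $P$ are \emph{leaves} of $C(x)$, whereas the paper only records that the leaves are vertices and that all vertices lie in $C(x)$; your argument uses only the latter, so no harm is done, but as stated the stronger claim is not what is on record (a vertex of $P$ can in principle sit at a ramification point).
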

\noindent

The overall form of our proof of Theorem~\thmref{main} is to create
a star unfolding $S$ by pasting together $x_i$-apexed fundamental triangles straddling
each edge of $T$, and then applying Alexandrov's theorem 
to conclude that the folding of $S$ 
yields a convex polyhedron $P$ which realizes $C(x)$.

\subsection{Cut Locus Partition}

The last tool we need is a generalization of lemmas in~\cite{inv-cfcp-12}.
On a polyhedron $P$,
connect a point $x$ to a point $y \in C(x)$ by two geodesic segments
$\g, \g'$.
This partitions $P$ into two ``half-surface" digons $H_1$ and $H_2$.
If we now zip each digon separately closed by joining $\g$ and $\g'$,
AGT leads to two convex polyhedra $P_1$ and $P_2$.
The lemma says that the cut locus on $P$ is the ``join" of the cut loci on $P_i$.
See Fig.~\figref{Partition}.

\begin{lm}
\lemlab{Partition}
Under the above circumstances, the cut locus $C(x,P)$ of $x$ on $P$
is the \emph{join} of the cut loci on $P_i$:
$C(x,P) = C(x,P_1) \sqcup_y C(x,P_2)$, where $\sqcup_y$ joins the two cut loci at $y$.
And starting instead from $P_1$ and $P_2$, the natural converse holds as well.
\end{lm}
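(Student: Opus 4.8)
The plan is to analyze how shortest paths from $x$ behave with respect to the digon boundary formed by $\g$ and $\g'$, and to show that cutting along these geodesics preserves the relevant cut-locus structure on each side. First I would establish that since both $\g$ and $\g'$ are shortest paths from $x$ to $y \in C(x)$, the point $y$ itself lies on the cut locus, and the two segments bound a digon whose interior contains no crossing shortest paths that would ``leak'' between the two halves. The key observation is that for any point $z$ in the interior of $H_i$, a shortest path from $x$ to $z$ on $P$ cannot cross $\g \cup \g'$: if it did, one could shortcut through $y$ or produce a shorter path, contradicting minimality. This locality is what lets the cut loci decouple. I would make this precise using the fundamental-triangle decomposition of Lemma~\lemref{FundamentalTriangles}, since each fundamental triangle lies entirely within one half-surface once we cut along $\g$ and $\g'$.

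Next I would verify that zipping each digon closed satisfies the hypotheses of AGT. Each $H_i$ is a topological disk bounded by $\g$ and $\g'$ (each of the same length, being geodesics from $x$ to $y$), so identifying $\g$ with $\g'$ produces a topological sphere. The angle condition is the delicate point: the total cone angle at $x$ on $P$ is split between $H_1$ and $H_2$, and at $y$ the angles coming from the two boundary arcs must be checked to total at most $2\pi$ after gluing. Since $P$ is already convex, every point carries at most $2\pi$ of surface angle, and the angle at $x$ in each $H_i$ is a sub-angle of the full angle, so the condition is inherited. Thus AGT yields convex polyhedra $P_1$ and $P_2$, each carrying a distinguished image of $x$.

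Then I would identify the cut locus on each $P_i$. The crucial claim is that distances from $x$ are preserved: for $z \in H_i$, the shortest-path distance from $x$ to $z$ measured on $P_i$ equals the distance measured on $P$, because (by the locality argument above) the $P$-geodesic stays inside $H_i$ and survives the zipping unchanged. Consequently $C(x,P_i) = C(x,P) \cap H_i$, and since the two halves share only the arc structure meeting at $y$, the full cut locus decomposes as the join $C(x,P_1) \sqcup_y C(x,P_2)$. The converse direction — starting from $P_1, P_2$ and reconstructing $P$ by unzipping and regluing along the shared geodesics — follows by running AGT in reverse, matching the boundary arcs $\g, \g'$ of the two pieces; I would note that the compatibility of the cone angle at $x$ (the two sub-angles summing to the original) guarantees the reglued surface is again a convex polyhedron.

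The main obstacle I expect is the distance-preservation claim, i.e.\ ruling out that some shortest path from $x$ to a point in $H_i$ genuinely exits the digon and re-enters, thereby making the $P$-distance strictly smaller than the $P_i$-distance. Handling this rigorously requires showing that $\g \cup \g'$ acts as a barrier that shortest paths never cross transversally, which rests on the fact that both arcs are themselves shortest paths to the common endpoint $y$ and on the convexity of $P$ (no negative curvature to bend geodesics back across the boundary). I would anticipate invoking the characterization of the cut locus as the locus of multiple shortest paths together with a careful first-variation or triangle-inequality argument at the boundary to close this gap.
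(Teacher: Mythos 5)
Your proposal takes essentially the same route as the paper: the paper's proof is itself only a sketch, resting entirely on the locality claim that geodesic segments from $x$ into $H_i$ remain in $H_i$ (justified there by the observation that geodesic segments do not branch, with formal details deferred to a companion paper), which is exactly the barrier property you identify as the crux. The one point worth tightening is your justification of that barrier: rather than a ``shortcut through $y$'' argument, the clean closing step is that a shortest path from $x$ crossing $\gamma$ transversally at an interior point $p$ would give two distinct shortest paths from $x$ to $p$, forcing $p \in C(x)$ in the relative interior of the shortest path $\gamma$ --- impossible --- while coinciding with $\gamma$ up to $p$ and then departing would be a branching of geodesics.
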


\begin{proof}
(\emph{Sketch}. See~\cite{ov-teb-2020} for a formal proof.)
All geodesic segments starting at $x$ into $H_i$
remain in $H_i$, because geodesic segments do not branch.
Therefore, $H_1$ has no influence on $C(x,P_2)$ and
$H_2$ has no influence on $C(x,P_1)$.
\end{proof}
\begin{figure}[htbp]
\centering
\includegraphics[width=0.75\linewidth]{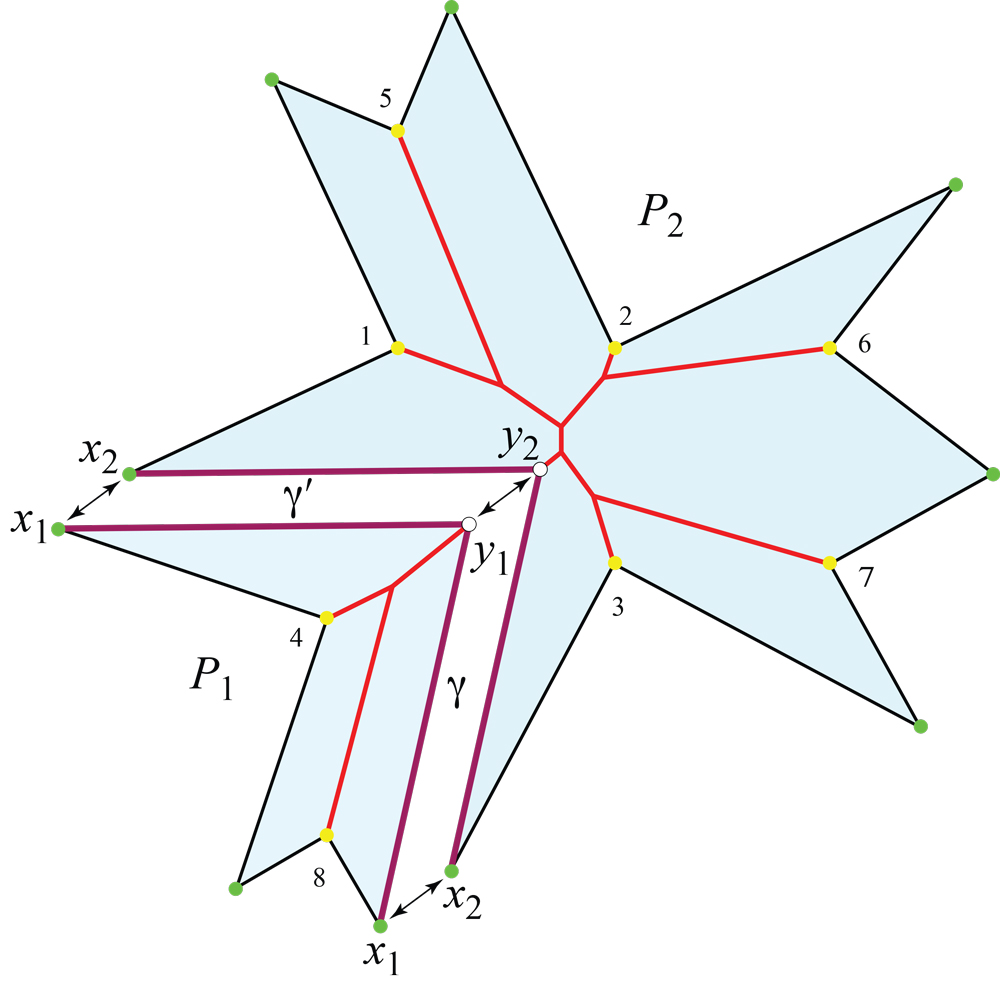}
\caption{Geodesic segments
$\g$ and $\g'$ (purple) connect $x{=}x_1{=}x_2$ to $y{=}y_1{=}y_2$.
$P_1$ folds to a tetrahedron, and $P_2$ to an $8$-vertex polyhedron, with $x$ and $y$ vertices in each.
$P_1$ and $P_2$ are cut open along geodesic segments from $x_i$ to $y_i$ and glued together to form $P$.
Based on the cube unfolding in Fig.~\protect\figref{StarUnfCube}(b).
}
\figlab{Partition}
\end{figure}

\section{Star-Tree}
If $T$ has a node of degree-$2$, then its incident edges may be merged
and their edge-lengths summed.
So henceforth we assume $T$ has no nodes of degree-$2$.
We start with $T$ a star-tree:
one central node $u$ of degree-$m$ with edges to nodes $u_1,u_2,\ldots,u_m$. 

A \emph{cone} in the plane is the unbounded region between two rays from its apex. 
Set $\lambda > L$ to be longer than $L$, the length of the longest edge of $T$.

We realize $T$ within a cone of apex angle $\a$, with $0 < \a \le 2 \pi$.%
\footnote{
Note that we allow $\a > \pi$; $\a=2\pi$ represents the whole plane.}
See Fig.~\figref{StarGraphLemma}.
Identify points $x_1$ and $x_{m+1}$ on the cone boundary, with
$|u x_i|=\lambda$.
Inside the cone, place $x$ images
$x_2,\ldots,x_m$, with each $|u x_i|=\lambda$.
Finally place $u_i$ so that $u u_i$ bisects 
$\angle(u x_i,u x_{i+1})$, $i=1,\ldots,m$.
Chose $u_i$ so that $|u u_i|$ matches $T$'s edge weights.
Finally, connect
$$
( u, x_1, u_2, x_2, \ldots, x_m, u_m, x_{m+1} )
$$
into a simple polygon.%
\footnote{If $\a=2\pi$, $x_1 = x_{m+1}$ and $u$ is interior to the polygon.}
\begin{figure}[htbp]
\centering
\includegraphics[width=0.70\linewidth]{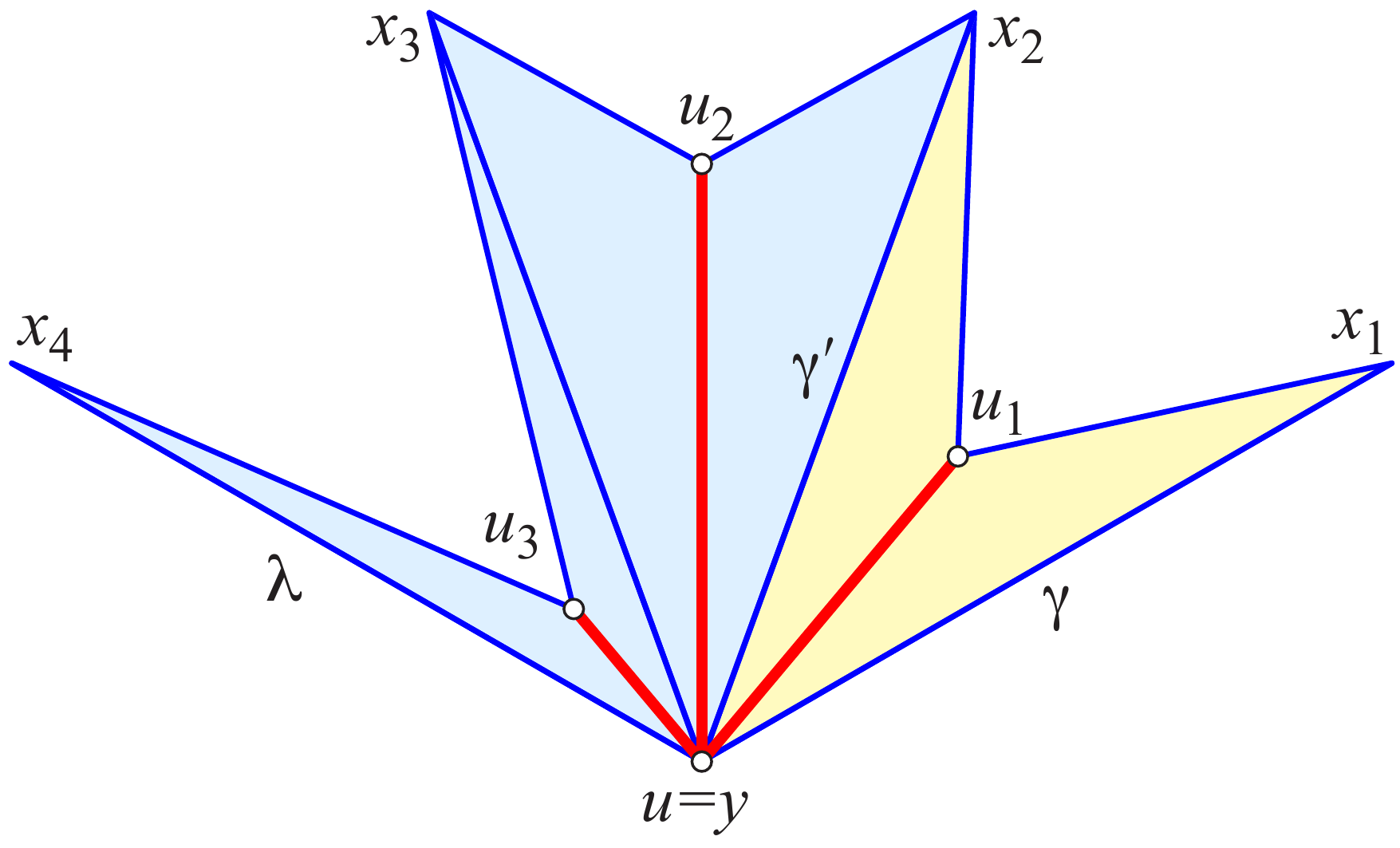}
\caption{$\a=120^\circ$, $m=3$, $T$ edge lengths $(2,3,1)$ (red), $\lambda=4$.
In the induction proof of Lemma~\protect\lemref{Star-T},
$\bar{P}(T_1)$ (yellow) is joined to $\bar{P}(T_2)$ (blue).}
\figlab{StarGraphLemma}
\end{figure}

Before we proceed with the proof, we emphasize that there are
several free choices in this construction, illustrated in Fig.~\figref{TwoParams}:
\begin{itemize}
\squeezelist
\item The angle $\a$ at the root is arbitrary.
\item The angular distribution of the $u x_i$ segments is arbitrary.
\item $\lambda > L$ needs to be ``sufficiently large" in a sense
we will quantify, but otherwise is arbitrary.
\end{itemize}
In general we will distribute 
$x_i$ equi-angularly.
Choosing $\a=2\pi$ results in a polyhedron of $n+1$ vertices;
for $\a< 2 \pi$, $u$ is an additional vertex.
\begin{figure}[htbp]
\centering
\includegraphics[width=1.0\linewidth]{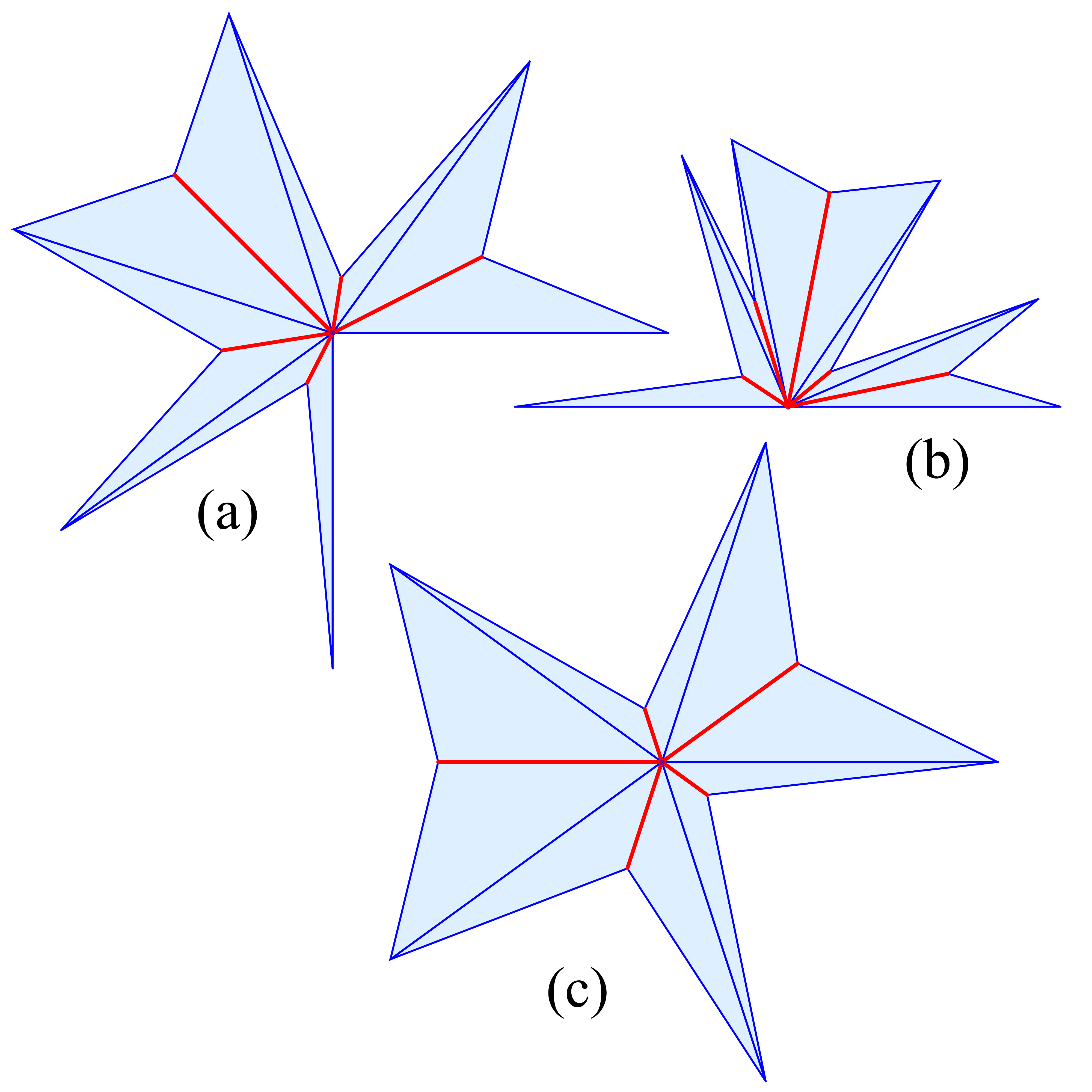}
\caption{Star-$T$ with edge lengths $(3, 1, 4, 2, 1)$.
(a)~$\a=270^\circ$, equiangular $x_i$, $\lambda=6$.
(b)~$\a=180^\circ$, random $x_i$, $\lambda=5$.
(c)~$\a=360^\circ$, equiangular $x_i$, $\lambda=6$.}
\figlab{TwoParams}
\end{figure}

We call the described star-$T$ construction a \emph{triangle packing}.
\begin{lm}
\lemlab{Star-T}
A triangle packing of star-$T$, for sufficiently large $\lambda$,
is the star-unfolding of a polyhedron $P$ with respect to a point $x$ such that $T=C(x)$.
\end{lm}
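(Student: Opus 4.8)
My plan is to argue by induction on the number $m$ of edges of the star, following the decomposition indicated in Figure~\figref{StarGraphLemma}: split the packing along the spoke $u x_{k+1}$ into two sub-packings realizing sub-stars $T_1$ and $T_2$, and glue the resulting polyhedra back together via the converse of the Cut-Locus Partition Lemma~\lemref{Partition}. Throughout, $\lambda$ is taken above a threshold $\lambda_0(T)$ fixed at the end. For the base case $m=1$ the packing is the symmetric kite $(u, x_1, u_1, x_2)$, namely the single fundamental-triangle pair of Lemma~\lemref{FundamentalTriangles} reflected across the bisector $u u_1$. This reflection symmetry gives $|x_1 u_1| = |x_2 u_1|$ and $|u x_1| = |u x_2| = \lambda$, so the two edge-pairs to be glued have matching lengths; the quotient is a topological sphere; and for $\lambda > \lambda_0$ the angle accumulated at the single image point $x$ (the sum of the kite's interior angles at $x_1$ and $x_2$) is well below $2\pi$. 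Thus AGT applies and yields a convex polyhedron $P$ (here a doubly-covered triangle) on which $x_1, x_2$ fold to one vertex $x$ and $u$ to another. The two glued edges are the only cut and are therefore shortest paths, so the kite is the star unfolding of $(P,x)$; the star-unfolding/Voronoi correspondence then identifies $C(x)$ with the restriction to the kite of the bisector of $x_1,x_2$, which — since both $u$ and $u_1$ are equidistant from $x_1,x_2$ — is exactly the segment $u u_1 = T_1$.

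For the inductive step I split the cone of angle $\a$ into adjacent sub-cones of angles $\a_1,\a_2$ along the spoke $u x_{k+1}$, so that the first $k$ and the last $m-k$ fundamental-triangle pairs form the triangle packings of sub-stars $T_1$ and $T_2$. By the induction hypothesis each sub-packing is the star unfolding of a convex polyhedron $P_j$ with $C(x,P_j)=T_j$; because $\a_j < 2\pi$, the apex $u$ is a genuine vertex of each $P_j$, and the geodesic from $x$ to $u$ has length $\lambda$ in both. These are precisely the segments along which Lemma~\lemref{Partition} cuts, so its converse lets me cut $P_1$ and $P_2$ open along their $x$-to-$u$ geodesics and glue them into a single convex polyhedron $P$, whose apex angle at $u$ is $\a_1 + \a_2 = \a$ (making $u$ interior exactly when $\a = 2\pi$). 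Lemma~\lemref{Partition} then gives $C(x,P) = C(x,P_1) \sqcup_u C(x,P_2) = T_1 \sqcup_u T_2 = T$, and the glued unfoldings reassemble into the full packing, which is therefore the star unfolding of $(P,x)$.

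The work, and the only place the hypothesis on $\lambda$ is consumed, lies in choosing one threshold $\lambda_0(T)$ that simultaneously guarantees, at every node of the finite recursion, that: (i)~the planar packing is simple, i.e.\ embedded without self-overlap; (ii)~each boundary edge $x_i u_i$ is the genuinely shortest path from $x$ to $u_i$ on the folded surface, so that the packing is a true star unfolding and the Voronoi characterization of $C(x)$ is available; and (iii)~the angle glued at the image vertex $x$ stays below $2\pi$, so that AGT produces a (possibly degenerate) convex polyhedron with $x$ a bona fide vertex. All three are governed by the same phenomenon: as $\lambda \to \infty$ with the edge weights fixed, every fundamental triangle $u x_i u_i$ becomes long and thin, its angle at $x_i$ tending to $0$ and the entire packing tending to a thin star-shaped region about $u$; hence a finite $\lambda_0$ suffices and, since the weights are unchanged across the recursion, one threshold serves all levels.

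I expect step~(ii) to be the main obstacle: quantitatively ruling out an \emph{interior} shortcut — a geodesic from $x$ to $u_i$ shorter than the cut edge and not running along the boundary — is what genuinely forces $\lambda$ to be large, and it is the hinge on which the Voronoi identification of $C(x)$ with $T$ turns. Simplicity~(i) and the angle bound~(iii) are comparatively routine once the thin-triangle estimate is in hand, and the cut-locus bookkeeping is entirely absorbed into repeated application of Lemma~\lemref{Partition}.
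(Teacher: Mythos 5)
Your proposal matches the paper's proof almost exactly: induction on the number $m$ of edges, a doubly-covered-triangle base case, and an inductive step that splits the star at the root into $T_1\cup_u T_2$ and invokes Lemma~\lemref{Partition} (you run it in the converse direction, gluing $P_1$ and $P_2$ together, while the paper folds the whole packing by AGT first and then cuts along $\g,\g'$; both directions are explicitly sanctioned by that lemma). The obstacle you flag in step~(ii) dissolves in both treatments for the same reason — once the partition induction yields $C(x,P)=T$, the cut segments $x u_i$ are the lateral edges of the fundamental triangles and hence geodesic shortest paths — so the only quantitative demand on $\lambda$ is your condition~(iii), which the paper isolates and makes explicit in Lemma~\lemref{xFormula}.
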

\begin{proof}
The proof is by induction on 
the number $m$ of edges of $T$, which is the degree of the root node $u$.
If $T$ is a single edge $e=u u_1$, then folding the twin triangles by creasing $e$ and joining $x_1$ and $x_2$, the two images of $x$,
leads to a doubly covered triangle with $x$ at the corner opposite $e$.
(See the yellow triangles in Fig.~\figref{StarGraphLemma}.)
Clearly $e=C(x)$ is the restriction of the Voronoi diagram of $x_1$ and $x_2$, the bisector of $x_1$ and $x_2$.
Concerning $\lambda$, we need that
the angle $\q_x$ incident to $x$ is at most $2\pi$.
In this base case, $\q_1 + \q_2 \le 2 \pi$ is immediate, 
so $\lambda$ is sufficiently large.

Now let $n>1$, and partition $T$ into $T_1 \cup_u T_2$ with root degrees $m_1$ and $m_2$ respectively, where $\cup_u$ indicates joining the trees at the root $u$.
We will use $\bar{P}(T)$ for the planar triangle packing for $T$, and $P(T)$ for the folded polyhedron.

We first address $\lambda$.
In order to apply AGT, we need that $\q_x$, the sum of the angles at the tips of the triangles---the images of $x$---is at most $2 \pi$.

Let $\lambda$ be the larger of $\lambda_1$ and $\lambda_2$ for $\bar{P}(T_1)$ and $\bar{P}(T_2)$ respectively, and stretch the smaller $\lambda_i$
so that they both share the same $\lambda$.
Form $\bar{P}(T)$ by joining the two now-compatible triangle packings.
Fixing $\a$ and the sector angles, it is clear that the angle at each triangle tip decreases monotonically as $\lambda$ increases.
So increase $\lambda$ as needed so that $\q_x \le 2 \pi$.
Somewhat abusing notation, call these possibly enlarged packings $\bar{P}(T_1)$, $\bar{P}(T_2)$ and $\bar{P}(T)$.

Now we aim to show that  $\bar{P}(T)$ is the star-unfolding of a polyhedron with $T=C(x)$.
Certainly  $\bar{P}(T)$ folds to a polyhedron $P$, because (a)~by construction the edges incident to $u_1,\ldots,u_k$ match in length,
and (b)~we have ensured that $\q_x \le 2 \pi$.
So Alexandrov's theorem applies.
Now identify $\g$ and $\g'$ on $P$ from $x$ to $y=u$ separating the surface into pieces corresponding to
$\bar{P}(T_1)$ and $\bar{P}(T_2)$, which fold to $P_1$ and $P_2$ respectively.
(Refer again to Fig.~\figref{StarGraphLemma}.)
By the induction hypothesis, $T_1 = C(x,P_1)$ and $T_2 = C(x,P_2)$.
Applying Lemma~\lemref{Partition}, we have
$C(x,P) = C(x,P_1) \sqcup_y C(x,P_2)$, where the two cut loci are joined at $y=u$.
And so indeed $T_1 \cup_u T_2 = T = C(x)$.
\end{proof}
\noindent
In Section~\secref{xFormula} we will calculate the needed $\lambda$ explicitly.

\section{General Length-Trees $T$}
\seclab{General}
We now generalize the above to arbitrary length-trees $T$, using the example in Fig.~\figref{Level23} as illustration.
\begin{figure}[htbp]
\centering
\includegraphics[width=1.0\linewidth]{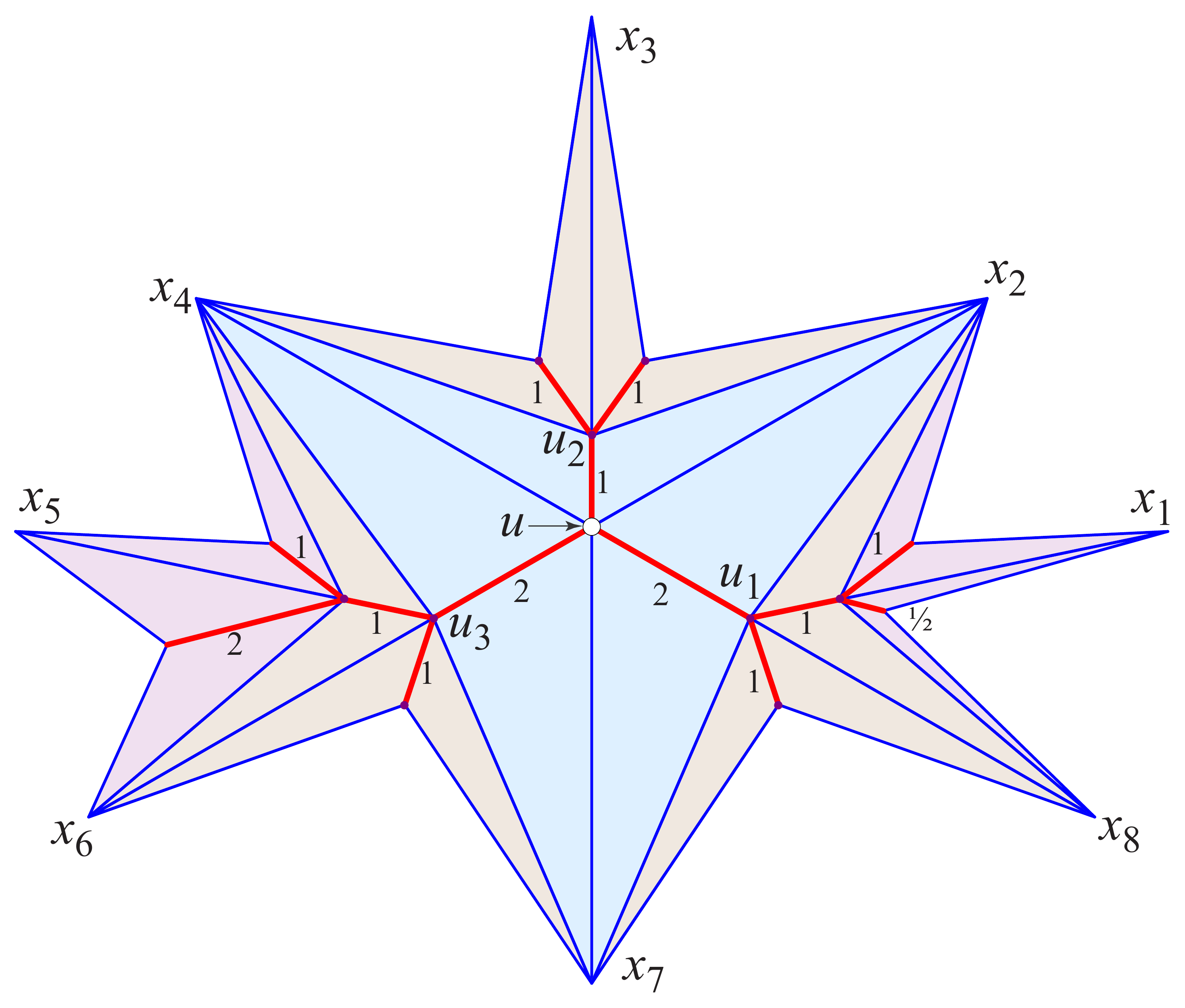}
\caption{Realization of a length-tree $T$ of height $3$, using
$\a=2\pi$ and $\lambda=L=5$. This polygon folds to a 
polyhedron of $9$ vertices: the $8$ leaves of $T$, and $x$.}
\figlab{Level23}
\end{figure}

Given $T$, select any node $u$ to serve as the root.
Fix any $\a$, and choose $\lambda$ to exceed the length $L$ of the longest path
from $u$ to a leaf in $T$.
Now create a triangle packing for $T$ as follows.

First create a triangle packing for $u$ and its immediate
children $u_1,\ldots,u_m$, just as previously described.
With $\lambda > L$,
the external angle at $u_i$ is strictly less than $\pi$, i.e., it forms a ``$V$-shape" there.
Call this a \emph{cup}, $c_i=(x_i, u_i, x_{i+1})$ with $\a_i$
external angle at $u_i$.
Let $u_i$ have children $u_{i1},u_{i2},\ldots,u_{il}$.
So $u_i, (u_{i1},\ldots,u_{il})$ is a star-tree.
Fill in the cup $c_i$ by inserting a triangle packing
for this sub-star-tree, with apex at $u_i$, angle $\a_i$,
and $\lambda$-length $| u_i x_i |$, the distance from $u_i$ to the tips of the cup.

After filling the $c_i$ cups for all the $u_i$ at level-$2$ of $T$, repeat
the process with level-$3$ of $T$, and so on.
Throughout the construction, the locations for $x_i$ remain fixed after 
their initial placement.
And with sufficiently large $\lambda$, all the cups form $V$-shapes.

Note that the triangles incident to an internal node $u_i$ of $T$
(neither the
root nor a leaf)
leave no gaps: they cover the $2\pi$ surrounding $u_i$.

\begin{lm}
\lemlab{Any-T}
A triangle packing for any $T$, as just described, for sufficiently large $\lambda$,
is the star-unfolding of a polyhedron $P$ with respect to a point $x$ such that $T=C(x)$.
\end{lm}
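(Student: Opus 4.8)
The plan is to run the induction of Lemma~\lemref{Star-T} one level higher: the pieces split off at the root will be entire subtrees rather than single edges, and the induction parameter will be the number of edges of $T$. The base case is any star-tree, already handled by Lemma~\lemref{Star-T}. For the inductive step I would observe that if $T$ is not itself a star-tree then some child $u_i$ of the root is internal (has children of its own), and I would take this $u_i$ as the point $y$ of the Cut Locus Partition Lemma~\lemref{Partition}.

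First I would dispatch the gluing, exactly as in Lemma~\lemref{Star-T}. The edges incident to the leaves match in length because the two fundamental triangles straddling each tree edge are reflected copies; the internal nodes $u_i$ receive exactly $2\pi$, since the two parent triangles together with the inserted sub-star fill the full angle around $u_i$ (independently of $\lambda$); and the glued surface is a sphere. The only real constraint is the total angle $\theta_x$ glued at the single apex image $x$. Freezing $\alpha$ and all angular distributions, I would note that as $\lambda \to \infty$ every cup opening $\alpha_i \to 0$, hence every triangle-tip angle $\to 0$ and $\theta_x \to 0$. Because $T$ is finite, a single sufficiently large $\lambda$ simultaneously (i)~keeps each cup a proper $V$-shape, so the nested sub-packings occupy disjoint angular wedges and $\bar P(T)$ stays a simple polygon; (ii)~makes each inserted sub-packing's $\lambda$-length $|u_i x_i|$ exceed the longest root-to-leaf path of its subtree (from $\lambda > L$ and $|u_i x_i| \ge \lambda - |u u_i|$), so the construction is recursively legitimate; and (iii)~forces $\theta_x \le 2\pi$. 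Then AGT folds $\bar P(T)$ to a convex polyhedron $P$.

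It remains to identify $C(x,P)$ with $T$. Write $T = T^- \sqcup_{u_i} T_i^-$, where $T_i^-$ is the subtree rooted at $u_i$ and $T^-$ is $T$ with all descendants of $u_i$ deleted so that $u_i$ becomes a leaf. On $P$, the two cup-bounding segments---the images $x_i u_i$ and $x_{i+1} u_i$, equal in length by the bisector symmetry---are two distinct shortest paths from $x$ to $u_i \in C(x)$, so Lemma~\lemref{Partition} applies. Cutting along them and zipping each digon closed yields $P_1$ and $P_2$: zipping the cup interior identifies $x_i$ with $x_{i+1}$ and reproduces precisely the triangle packing for $T_i^-$ (root $u_i$, opening $\alpha_i$, $\lambda$-length $|u_i x_i|$), while zipping the complement removes the angle $\alpha_i$ at $u_i$, turning it into a genuine vertex and reproducing precisely the packing for $T^-$. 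Both trees have strictly fewer edges, so by induction $C(x,P_1)=T_i^-$ and $C(x,P_2)=T^-$, and Lemma~\lemref{Partition} gives $C(x,P) = C(x,P_1)\sqcup_{u_i} C(x,P_2) = T_i^- \sqcup_{u_i} T^- = T$.

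The step I expect to fight with is the geometric bookkeeping behind ``sufficiently large $\lambda$'': certifying that one global choice of $\lambda$ renders the whole nested construction a simple, non-overlapping polygon with $\theta_x \le 2\pi$, uniformly across all recursively inserted sub-packings and their own thresholds. By comparison the partition step is routine, but it still requires care to check that the two zipped digons reproduce verbatim the packings of $T_i^-$ and $T^-$, since that is exactly what lets the induction hypothesis feed the clean join $C(x,P)=T_i^-\sqcup_{u_i}T^-$.
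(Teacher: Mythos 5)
Your proof is correct, and it uses the same three tools as the paper (triangle packings, AGT, and the Partition Lemma~\lemref{Partition}), but the decomposition driving the induction is genuinely different. The paper splits $T$ \emph{at the root}: $T = T_1 \cup_u T_2$ with both pieces sharing the root $u$, the cut point for the Partition Lemma is $y=u$, and the cone angle is divided as $\a = \a_1 + \a_2$; the two pieces are literally sub-fans of the same packing, so the re-assembly bookkeeping is trivial, and the induction reduces the ``width'' at the root. You instead peel off the entire subtree hanging below an internal child $u_i$ of the root, taking $y = u_i$ and using the cup boundary segments $x_i u_i$, $x_{i+1} u_i$ as the digon boundary; the base case (star-trees, Lemma~\lemref{Star-T}) then carries all the angular content while your inductive step reduces height. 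Your route hews more closely to how the packing is actually built (cup by cup), at the cost of the extra verification---which you correctly flag and carry out---that zipping the cup closed reproduces a legitimate packing for $T_i^-$ with root $u_i$, opening $\a_i$, and the smaller parameter $|u_i x_i|$, and that this parameter is itself ``sufficiently large'' for the subtree. One cosmetic point: like the paper, you phrase the partition step in the forward direction (asserting that the two cup segments are shortest paths to $u_i \in C(x)$ before $C(x)$ is known); what both arguments really invoke is the stated \emph{converse} of Lemma~\lemref{Partition}, building $P$ by gluing $P_1$ and $P_2$ cut open along geodesic segments of matching length $|x_i u_i|$. Your handling of the global $\lambda$ (tip angles and $\q_x$ decrease monotonically as $\lambda$ grows, so one choice suffices for the finite tree) is the same monotonicity argument the paper uses before deferring the explicit bound to Lemma~\lemref{xFormula}.
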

\begin{proof}
The proof is by induction, and parallels the proof of Lemma~\lemref{Star-T}
closely. Consequently, we only sketch the proof.

The base of the induction is a star-graph, settled by Lemma~\lemref{Star-T}.
Let $T$ be an arbitrary length-tree, and partition $T$ into two smaller trees $T_1$ and $T_2$
sharing the root $u$, so $T = T_1 \cup_u T_2$.
Select an $\a$ for $T$ and $\a_i$ for $T_i$, $i=1,2$, so that $\a= \a_1+\a_2$. 

By the induction hypothesis, $T_i$ can be realized in cups of angle $\a_i$.
Moreover, $\bar{P}(T_i)$ folds to $P_i$ and $T_i = C(x,P_i)$.
Stretch $\lambda_i$ as needed to allow $\bar{P}(T_1)$ to share $\lambda$ with $\bar{P}(T_2)$ at $u$, and stretch again so that $\q_x \le 2 \pi$.

Form $\bar{P}(T)$ by adjoining $\bar{P}(T_1)$ and $\bar{P}(T_2)$ at $u$, with cup apex $\a$.
Fold $\bar{P}(T)$ to $P$ by AGT. 
Apply Lemma~\lemref{Partition} to conclude $C(x,P) = C(x,P_1) \sqcup_y C(x,P_2)$, where $y=u$.
And so $T_1 \cup_u T_2 = T = C(x)$.
\end{proof}

\noindent
Note that all ramification points of $C(x)$ are flat on $P$, with $2 \pi$ incident surface angle.
If $\q_x$ is strictly less than $2 \pi$, then the source $x$ is a vertex on $P$.
If at the root, $\a < 2\pi$, then in addition $u$ is a vertex on $P$.
So $P$ has $n$, $n+1$, or $n+2$ vertices.

Lemmas~\lemref{Star-T} and~\lemref{Any-T}, together with Lemma~\lemref{xFormula} (below)
prove Theorem~\thmref{main}.
The construction of the triangle packing can be achieved in $O(n)$ time:
we are given the cyclic ordering of the edges incident to each
node, so sorting is not necessary, and the level-by-level packing construction is
proportional to the the number of edges.


\subsection{Total angle at $x$}
\seclab{xFormula}
We derive here a sufficient condition on the value of the parameter $\lambda$ for the root cup to guarantee that $\q_x \leq 2 \pi$.
\begin{lm}
\lemlab{xFormula}
If $T$ has $m$ edges, and $L$ is the longest path
from the root in $T$
then $\q_x \leq 2 \pi$
when
$$
   \lambda \ge L \left[ 1 +  \cot \left( \frac{\pi}{m} \right) \right] \;.
$$
\end{lm}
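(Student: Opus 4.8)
The plan is to express $\q_x$ as a sum of apex angles and bound each term. By Lemma~\lemref{FundamentalTriangles}, $P$ is partitioned into $2m$ flat triangles, one congruent pair per edge of $C(x)=T$, all having apex $x$; hence the angle $\q_x$ glued at $x$ is exactly the sum of their $2m$ apex angles. It therefore suffices to force each apex angle below $\pi/m$, since $2m\cdot(\pi/m)=2\pi$.

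Next I would control the geometry of a single fundamental triangle. The root $u$ satisfies $|xu|=\lambda$, and every node $w$ of $T$ lies within geodesic distance $L$ of $u$ (the edges of $C(x)$ are geodesics and $L$ bounds the longest root-to-leaf path), so the triangle inequality gives $|xw|\ge\lambda-L$. Thus each fundamental triangle has base at most $L$ and both lateral sides at least $\lambda-L$. First I would show that such a triangle, as it arises here, opens \emph{outward} from $x$---the foot of the altitude from $x$ to the base line falls at or beyond the endpoint nearer to $x$---so that the base of length $\le L$ is seen edge-on from distance $\ge\lambda-L$. Granting this, the apex angle $\g_j$ of the triangle over edge $e_j$ obeys
$$
\g_j\ \le\ \arctan\frac{L}{\lambda-L}\,.
$$

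Summing over all $2m$ triangles then yields $\q_x\le 2m\arctan\frac{L}{\lambda-L}$, and this is at most $2\pi$ exactly when $\arctan\frac{L}{\lambda-L}\le\pi/m$, i.e.\ when $\frac{L}{\lambda-L}\le\tan(\pi/m)$, which rearranges to $\lambda\ge L\left[1+\cot(\pi/m)\right]$, as claimed; monotonicity of the apex angles in $\lambda$ makes the condition robust to enlarging $\lambda$. The hard part will be the outward-opening claim, i.e.\ controlling where the altitude foot lands, and it is essential for the stated constant: if one used only that both lateral sides exceed $\lambda-L$, the worst (isosceles) configuration gives the weaker bound $2\arcsin\frac{L}{2(\lambda-L)}$, and the foot-interior case gives $2\arctan\frac{L}{2(\lambda-L)}$, each of which would demand a strictly larger $\lambda$ than $L\left[1+\cot(\pi/m)\right]$. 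So the crux is a local angle argument at the endpoint of each cut-locus edge nearer to $x$, showing the incident angle there is at least $\pi/2$; the outward radial layout of the triangle packing---children placed farther from $x$ than their parents---is what I would use to establish it.
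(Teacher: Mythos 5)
Your proposal matches the paper's proof essentially step for step: the same decomposition of $\q_x$ into the $2m$ apex angles of the paired fundamental triangles, the same right-triangle comparison bounding each apex angle by $\arctan(\ell/d)$ with $d \ge \lambda - L$ obtained from the triangle inequality back to the root, and the same rearrangement of $2m\arctan\left(L/(\lambda-L)\right) \le 2\pi$ into $\lambda \ge L\left[1+\cot(\pi/m)\right]$. The one step you single out as the crux---that each triangle ``opens outward,'' i.e., the angle at the endpoint nearer $x$ is at least $\pi/2$, which is what justifies $\q_i \le \f_i$---is precisely the step the paper itself disposes of only by pointing at Fig.~\figref{Formula}(a), so your plan is no less complete than the published argument.
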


\begin{figure}[htbp]
\centering
\includegraphics[width=1.0\linewidth]{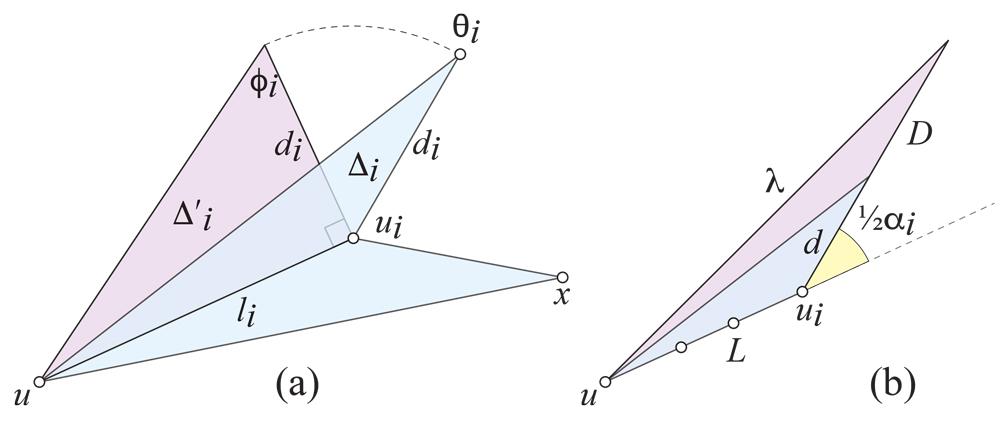}
\caption{(a)~$\q_i \le \f_i$. 
(b)~Increasing $\lambda$ increases $d$.
$L$ is the length of the longest path in $T$,
$D$ the target bound for $\lambda \ge L + D$.
}
\figlab{Formula}
\end{figure}

\begin{proof}
First we establish notation, illustrated in Fig.~\figref{Formula}(a).
As before, $u$ is the root of $T$, and let $u_i$ be a leaf of $T$, with edge $u u_i$ shared by twin triangles, one of which is $\triangle_i$. 
Let $\ell_i = |u u_i|$ and $d_i$ the distance from the tip of $\triangle_i$ to $u_i$, and $\q_i$ the angle at that tip, an image of $x$.
Because the fundamental triangles come in pairs, and there are $m$ edges, we have that the total angle at $x$ satisfies $\q_x = 2\sum \q_i$.

Consider now a right triangle $\triangle'_i$ having the same base as $\triangle_i$
and height $d_i$, and denote by $\f_i$ its angle opposite to the base $uu_i$.
Then $\f_i = \arctan \left(\dfrac{\ell_i}{d_i} \right)$
and $\q_i \leq \f_i$; 
see again Fig.~\figref{Formula}.

Because $\arctan$ is an increasing function, we can obtain an upper bound by replacing $\ell_i$ with the longest edge length $\ell$,
and replacing $d_i$ by the shortest of the $x u_j$ diagonals, call it $d$: ${\ell}= \max_i {\ell_i}$, $d= \min_i d_i$.
So $\f_i \le \arctan \left( \dfrac{\ell}{d} \right)$, and therefore
$$
\q_x \;=\; 2\sum \q_i \;\le\; 2\sum \f_i \;\le\; 2 m \arctan \left( \frac{\ell}{d} \right) \; .
$$
The expression $2 m \arctan \left( \dfrac{\ell}{d} \right)$ decreases as $d$ increases.
For it to evaluate to at most $2 \pi$, we must have 
$$
d \;\ge\; \ell \cot \left( \frac{\pi}{m} \right) \;.
$$
The longest path $L$ from root to leaf is at least as long as the longest edge, $L \ge \ell$, so this bound will more than suffice:
$$
d \;\ge\; L \cot \left( \frac{\pi}{m} \right) = D \;.
$$
Now we show that if $\lambda$ is long enough,
then $d \ge D$.
Consider Fig.~\figref{Formula}(b), where $L$ is the longest path
from $u$ to a leaf $u_i$. Let the external angle at $u_i$ be $\a_i$.
By definition, there is some index $j$ such that $d=d_j$.
The triangle inequality directly implies $\lambda  \le \ell_j  + d_j \le  L + d$.

With $L$ and $\a_i$ fixed, increasing $\lambda$ increases $d$.
If we substitute the needed lower bound $D$ for $d$ in the expression
(see Fig.~\figref{Formula}(b)), then
$\lambda \ge L + D$
forces $d \ge D$.
Explicitly
$$
\lambda \ge L \left[ 1 +  \cot \left( \frac{\pi}{m} \right) \right] \;
$$
suffices to guarantee that $\q_x \le 2 \pi$.
\end{proof}

\noindent
The bound---approximately $L(1+m / \pi)$---is far from tight.
For example, in Fig.~\figref{Level23}, $m=13$ and $L=5$ leads to $\lambda \ge 26$, 
but $\lambda=5$ (illustrated) leads to $\q_x \approx 167^\circ$,
and so easily suffices.

\section{Remarks}

One further example is shown in Fig.~\figref{Level5}.
It is the star unfolding of a polyhedron of $49$ vertices,
whose resemblance to a fractal suggests there might at 
a deeper connection.
We will only mention that fractals play a role in the folding of specific convex polyhedra in~\cite{uehara2020common},
and fractal cut loci on on $k$-differentiable Riemannian and Finslerian spheres are 
shown in~\cite{itoh2016riemannian} to exist for any $2 \leq k < \infty$.

\begin{figure}[htbp]
\centering
\includegraphics[width=1.0\linewidth]{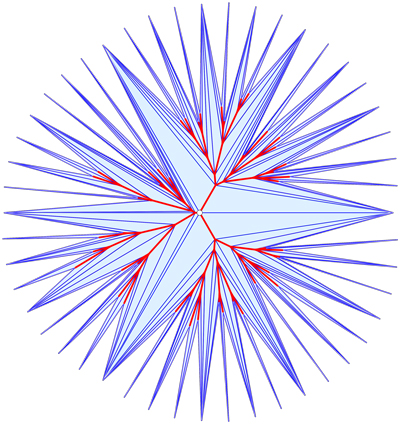}
\caption{Regular degree-$3$ tree, random edge lengths, $\a=2\pi$, $n=48$,
$\q_x \approx 317^\circ$.}
\figlab{Level5}
\end{figure}

A natural question is whether 
geometric trees---drawings embedded in the plane, and so providing
angles between adjacent edges---can be
realized as cut loci on convex polyhedra.
Certainly not all geometric trees are realizable, for there are constraints on the angles: 
around a ramification point,
no angle can exceed $\pi$, and the angles must sum to $\le 2 \pi$.
And the sum of the curvatures at the $u_i$ and at $x$
must be $4\pi$ to satisfy the Gauss-Bonnet theorem.

We leave it as an open problem to characterize those  geometric trees that are realizable as the cut locus on a convex polyhedron.

\bibliographystyle{alpha}
\bibliography{CutLocReal}
\end{document}